\newtheorem{lemma}{Lemma}
\newtheorem{proposition}{Proposition}
\newtheorem{example}{Example}
\newcommand{\cL}{{\cal L}}
\newcommand{\cP}{{\cal P}}
\newcommand{\cW}{{\cal W}}
\newcommand{\WMG}{\textsf{WMG}\xspace}
\newcommand{\SSP}{\textsc{SSP}\xspace}
\newcommand{\remove}[1]{}
\begin{document}

\title{Dimension and codimension of simple games}
\author{Sascha Kurz\\
Department of Mathematics\\
University of Bayreuth\\
Germany\\
sascha.kurz@uni-bayreuth.de
\and
Xavier Molinero\thanks{Partially funded by grant MTM2012-34426/FEDER from MINECO and FEDER.}\\
Department of Mathematics\\
Universitat Polit\`ecnica de Catalunya\\
Spain\\
xavier.molinero@upc.edu
\and
Martin Olsen\\
Department of Business Development and Technology\\
BTECH, Aarhus University\\
Denmark\\
martino@btech.au.dk
\and
Maria Serna\thanks{Partially funded  by grant TIN2013--46181-C2-1-R from MINECO and FEDER
and by grant 2014SGR1034 (ALBCOM) of ``Generalitat de Catalunya''.}\\
Department of Computer Science\\
Universitat Polit\`ecnica de Catalunya\\
Spain\\
mjserna@cs.upc.edu
}

\date{}

\maketitle

\begin{abstract}
\noindent
This paper studies the complexity of computing a representation of a simple game
as the intersection (union) of weighted majority games,
as well as, the dimension or the codimension.
We also present some examples with linear  dimension and exponential codimension with respect to  the number of players.

\medskip

\noindent
\textbf{Keywords:}  Simple games, Dimension, Codimension, Computational complexity\\
\textbf{MSC:} 91B12, 91A12 \\
\end{abstract}

\section{Introduction and preliminaries}
We consider the so-called \emph{simple games} and
the computational complexity of representing them as unions or intersections of weighted majority games.
Simple games and its dimension, as well as, weighted majority games, were defined by Taylor and Zwicker~\cite{TaZw99}.
Later, Freixas and Marciniak~\cite{FM10} introduced a new concept, the codimension of simple games.

A {\em simple game} is a tuple $\Gamma=(N,\cW)$, where $N$ is a finite set of {\em players} and $\cW\subseteq\cP(N)$ is a monotonic family of subsets of $N$.
Furthermore, \emph{its dual} $\Gamma^*=(N,\cW^*)$ is the game such that
$\cW^*=\{S\subseteq{}N\,:\,N\setminus{}S\not\in{}\cW\}$.
$\Gamma$ is said to be \emph{self-dual} if $\Gamma=\Gamma^*$.
Note that $(\Gamma^*)^*=\Gamma$.
Given two simple games $\Gamma_1=(N_1,\cW_1)$ and $\Gamma_2=(N_2,\cW_2)$, they are \emph{equivalent}
if $N_1=N_2$ and $\cW_1=\cW_2$.
The subsets of $N$ are called {\em coalitions}, the set $N$ is the {\em grand coalition} and each $X\in\cW$ is a {\em winning coalition}.
The complement of the family of winning coalitions is the family of {\em losing coalitions} $\cL$, i.e., $\cL=\cP(N)\setminus\cW$.
Any of those set families determine uniquely the game $\Gamma$ and constitute one of the usual forms of representation for simple games~\cite{TaZw99}, although the size of the representation is not, in general, polynomial in the number of players~\cite{MRS15b}.

A simple game $\Gamma$ is a {\em weighted majority game} (\WMG) if it admits a representation by means of  $n+1$ nonnegative real numbers $[q;w_1,\ldots,w_n]$ such that $S\in\cW
\iff w(S)\ge{}q$ where, for each coalition $S\subseteq{}N$, $w(S)=\sum_{i\in{}S}w_i$. The number $q$ is called the quota and $w_i$ the weight of the player $i$. It is well known that any \WMG admits a representation with integer numbers.

The {\em dimension} of a simple game $\Gamma$ is the least $k$ such that there exists \WMG{}s $\Gamma_1,\ldots,\Gamma_k$ such that $\Gamma=\Gamma_1\cap\ldots\cap\Gamma_k$.
On the other hand, the {\em codimension} of a simple game $\Gamma$ is the least $k$ such that there exists \WMG{}s $\Gamma_1,\ldots,\Gamma_k$ such that $\Gamma=\Gamma_1\cup\ldots\cup\Gamma_k$.

Many theoretical results and examples about dimension and codimension~\cite{OlKuMo16,KuNa15,FM10,FrPu08,FP01,TaZw99,FrPu98}
have been constantly appearing during the last years,
as well as, computational complexity results~\cite{MOS16,FMOS11,DeWo06}.
We present some results that will be used in Section~\ref{sec:comcom}.
\begin{lemma} 
  \label{lemma_upper_bound_maximal_losing}
  The dimension of a simple game $v$ is bounded above by $\left|\mathcal{L}^M\right|$ and the co-dimension is bounded above by
  $\left|\cW^m\right|$.
\end{lemma}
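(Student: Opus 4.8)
The plan is to prove both bounds by exhibiting, for each maximal losing coalition (for the dimension) and for each minimal winning coalition (for the codimension), one single \WMG, and then to verify that these games intersect, respectively unite, to $v$. The dimension bound I would handle by a direct construction, and the codimension bound would then follow either by a symmetric direct construction or, more economically, by dualizing the dimension bound.

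For the dimension bound, for each maximal losing coalition $L\in\mathcal{L}^M$ I would define the \WMG $\Gamma_L=[1;w_1,\ldots,w_n]$ with $w_i=0$ for $i\in L$ and $w_i=1$ for $i\notin L$. Then $w(S)=|S\setminus L|$, so a coalition $S$ is winning in $\Gamma_L$ exactly when $S\not\subseteq L$. The first thing to check is that every winning coalition of $v$ is winning in each $\Gamma_L$: if some $W\in\cW$ satisfied $W\subseteq L$, then monotonicity of $\cW$ would force $L\in\cW$, contradicting that $L$ is losing. Hence $\cW\subseteq\bigcap_{L}\cW(\Gamma_L)$. For the reverse inclusion, any losing coalition $S$ is contained in some maximal losing coalition $L\in\mathcal{L}^M$; since $S\subseteq L$ and $L$ is losing in $\Gamma_L$, monotonicity of $\Gamma_L$ makes $S$ losing in $\Gamma_L$, so $S\notin\bigcap_L\cW(\Gamma_L)$. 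This gives $v=\bigcap_{L\in\mathcal{L}^M}\Gamma_L$ and therefore dimension at most $|\mathcal{L}^M|$.

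For the codimension bound, the cleanest route is duality. I would use that the dual of a \WMG is again a \WMG: taking an integer representation $[q;w_1,\ldots,w_n]$, which exists by the remark above, the game with the same weights and quota $w(N)-q+1$ realizes $\Gamma^*$. Since dualizing turns an intersection into a union and is an involution, the codimension of $v$ equals the dimension of $v^*$. Combining this with the fact that the complement map is an order-reversing bijection sending maximal losing coalitions of $v^*$ to minimal winning coalitions of $v$, so that $|\mathcal{L}^M(v^*)|=|\cW^m(v)|$, the already-established dimension bound applied to $v^*$ yields codimension$(v)=$ dimension$(v^*)\le|\mathcal{L}^M(v^*)|=|\cW^m(v)|$. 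Alternatively, one mirrors the direct construction: for each $W\in\cW^m$ put $\Gamma_W=[\,|W|;w_1,\ldots,w_n]$ with $w_i=1$ for $i\in W$ and $w_i=0$ otherwise, so that $S$ is winning in $\Gamma_W$ iff $W\subseteq S$, and checks $v=\bigcup_{W\in\cW^m}\Gamma_W$.

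The only genuinely delicate points, and hence the main obstacle, are the two monotonicity arguments that upgrade the coalitionwise descriptions into the claimed equalities of games: that no winning coalition of $v$ can lie inside a losing coalition, and that every losing coalition sits below some maximal losing coalition (equivalently, every winning coalition contains a minimal winning coalition). Both rest only on the finiteness of $N$ and the monotonicity of $\cW$, so they are routine once stated carefully; the construction of the individual \WMG{}s is immediate.
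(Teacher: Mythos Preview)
Your proposal is correct and matches the paper's approach: the paper uses precisely your construction $\Gamma_L=[1;\chi_{N\setminus L}]$ for each $L\in\mathcal{L}^M$ to bound the dimension, and your alternative direct construction $\Gamma_W=[\,|W|;\chi_W]$ for each $W\in\cW^m$ to bound the codimension. The only minor difference is that you offer the duality route for the codimension bound as your preferred argument, whereas the paper proves both halves by the parallel direct constructions; your duality argument is equally valid (and indeed anticipates Lemma~\ref{lem:game-dual}), but since you also spell out the mirrored construction there is no substantive divergence.
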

\remove{
\begin{proof}
  For each coalition $S\in \mathcal{L}^M$ we set $q^S=1$, $w_i^S=0$ for all $i\in S$ and $w_i^S=1$ otherwise. Note that
  $\mathcal{L}^M\neq \emptyset$ since $\emptyset$ is a losing coalition. With this $w^S(S)=0<q^S$. However, for all $T\subseteq N$
  with $T\not\subseteq S$ we have $w(T)\ge 1=q^S$. Thus, we have $v=\bigwedge_{S\in\mathcal{L}^M} \left[q^S;w_1^S,\dots,w_n^S\right]$.
  Similarly, for each $S\in\cW^n$ we set $\widetilde{q}^S=|S|$, $\widetilde{w}_i^S=1$ for all $i\in S$ and $w_i^S=0$ otherwise.
  Note that $\cW^m\neq \emptyset$ since $N$ is a winning coalition. With this $\widetilde{w}^S(S)=\widetilde{q}^S$. However, for all
  $T\subseteq N$ with $S\not\subseteq T$ we have $w(T)<\widetilde{q}^S$. Thus, we have
  $v=\bigvee_{S\in\cW^m} \left[\widetilde{q}^S;\widetilde{w}_1^S,\dots,\widetilde{w}_n^S\right]$.
\end{proof}
}

\begin{lemma}
\label{lem:game-dual}
Let $\Gamma$ be a simple game.
 $\Gamma$ is the intersection of $t$ weighted games if and only if
 $\Gamma^*$ is the union of $t$ weighted games.
Furthermore a representation, as union (intersection), of $\Gamma^*$ can be obtained from a representation, as intersection (union), of $\Gamma$  in polynomial time.
 Moreover, dim($\Gamma$)=codim($\Gamma^*$),
 and if $\Gamma$ is self-dual then $dim(\Gamma)=codim(\Gamma)$.
\end{lemma}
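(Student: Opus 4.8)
The plan is to reduce everything to two elementary facts about the dual operation and then combine them. First I would establish that the dual of a weighted majority game is again a weighted majority game, together with an explicit formula for it. Starting from a game $\Gamma_0=[q;w_1,\ldots,w_n]$ with integer weights (available since every \WMG admits an integer representation), a coalition $S$ lies in $\cW_0^*$ iff $N\setminus S\notin\cW_0$, i.e.\ iff $w(N\setminus S)=w(N)-w(S)<q$, which for integers is equivalent to $w(S)\ge w(N)-q+1$. Hence $\Gamma_0^*=[\,w(N)-q+1;\,w_1,\ldots,w_n\,]$. This is computable in time polynomial in the input size (sum the weights, reset the quota), which will supply the polynomial-time part of the claim.

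Second I would prove the De Morgan law for the dual, $(\Gamma_1\cap\cdots\cap\Gamma_t)^*=\Gamma_1^*\cup\cdots\cup\Gamma_t^*$, by a direct set-theoretic computation on the winning families. A coalition $S$ is winning on the left iff $N\setminus S$ is losing in $\Gamma_1\cap\cdots\cap\Gamma_t$, i.e.\ iff $N\setminus S$ is losing in at least one $\Gamma_i$, i.e.\ iff $S$ is winning in at least one $\Gamma_i^*$, which is exactly membership on the right. Applying this law to the duals $\Gamma_i^*$ and using the involution $(\Gamma^*)^*=\Gamma$ then yields the companion identity $(\Gamma_1\cup\cdots\cup\Gamma_t)^*=\Gamma_1^*\cap\cdots\cap\Gamma_t^*$.

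With these two facts the equivalence in the statement is immediate. If $\Gamma=\Gamma_1\cap\cdots\cap\Gamma_t$ with each $\Gamma_i$ a \WMG, then $\Gamma^*=\Gamma_1^*\cup\cdots\cup\Gamma_t^*$ is a union of $t$ weighted games by the two facts; conversely, if $\Gamma^*$ is a union of $t$ weighted games, dualizing and invoking the companion identity together with $(\Gamma^*)^*=\Gamma$ expresses $\Gamma$ as an intersection of $t$ weighted games. Since in either direction the construction merely dualizes each factor via the explicit formula, it runs in polynomial time. The equality $\mathrm{dim}(\Gamma)=\mathrm{codim}(\Gamma^*)$ then follows because the equivalence makes ``$\Gamma$ is an intersection of $t$ weighted games'' and ``$\Gamma^*$ is a union of $t$ weighted games'' hold for exactly the same values of $t$, so their least witnesses coincide; specializing to $\Gamma=\Gamma^*$ gives $\mathrm{dim}(\Gamma)=\mathrm{codim}(\Gamma)$.

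There is no deep obstacle here, as the argument is a standard duality/De Morgan computation. The one point requiring care is the dual-of-a-weighted-game step, where the losing condition is the \emph{strict} inequality $w(N\setminus S)<q$: I must use the integrality of the weights (or, in general, choose a quota strictly between two attainable weight values) to rewrite it as a non-strict threshold so that the dual is genuinely a \WMG, and I must ensure the explicit quota formula $w(N)-q+1$ is precisely what drives the polynomial-time bound.
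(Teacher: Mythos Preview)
The paper states this lemma without proof, treating it as a folklore duality fact; there is nothing in the source to compare your argument against. Your proposal is correct and is exactly the standard route one would take: the explicit dual quota $w(N)-q+1$ for integer-weighted games, the De~Morgan identity $(\bigcap_i\Gamma_i)^*=\bigcup_i\Gamma_i^*$, and the involution $(\Gamma^*)^*=\Gamma$ combine to give all the claims, including the polynomial-time transfer.
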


Note that the converse statement of the last sentence is not true in general as there are weighted games which are not self-dual.


\section{Computational complexity of related problems}
\label{sec:comcom}

First, we present a simple game with $2n$ players, dimension $n$ and codimension $2^{n-1}$.
Other examples of simple games with $2n$ players and dimension $2^{n-1}$ can be found in~\cite{TaZw99,OlKuMo16}.

\begin{example}
\label{ex:dim-codim}
Given a positive integer $n$,
Freixas and Marcinicak (Theorem 2 of~\cite{FM10}) 
define a simple game
with $2n$ players and dimension $n$.
Let $\Gamma=(N,\cW)$ be a simple game
defined by $N=\{1,2,\ldots,2n\}$ and $S\in{}\cW$ iff
$S\cap\{2i-1,2i\}\neq\emptyset$, $i\in\{1,\ldots,n\}$,
then $\Gamma$ has dimension $n$, 
\remove{
\[
\begin{array}{rl}
 \Gamma = & \left[1; 1,1,0,0,\ldots,0,0\right] \cap \left[1; 0,0,1,1,0,0,\ldots,0,0\right] \cap \ldots \cap\\[.25cm]
          & \left[1;0,0,\ldots,0,0,1,1,0,0\right] \cap \left[1; 0,0,\ldots,0,0,1,1\right]
\end{array}
\]
}
\[
 \Gamma = \left[1; 1,1,0,0,\ldots,0,0\right] \cap \left[1; 0,0,1,1,0,0,\ldots,0,0\right]
          \cap \ldots \cap \left[1; 0,0,\ldots,0,0,1,1\right]
\]
As $S$ is a winning coalition in $\Gamma^*$ iff
$N\setminus{}S$ is a losing coalition in $\Gamma$, 
$ \Gamma^*=(N, \cW_1^* \cup \ldots \cup \cW_n^*)$,
where $\cW_i^*=\{S\subseteq{}N\,:\, \{2i-1,2i\}\subseteq{}S\}$, $i\in{}\{1,\ldots,n\}$.
As $\Gamma^*$ is a composition of $n$ unanimity games,
$\Gamma^*$ has dimension $2^{n-1}$~\cite{FP01} and
$\Gamma$ has codimension $2^{n-1}$ (by Lemma~\ref{lem:game-dual}).
\end{example}

\begin{proposition}
Given a simple game $\Gamma$ as union (intersection) of weighted games,
computing a representation of $\Gamma$ as intersection (union) of weighted games requires exponential time.
\end{proposition}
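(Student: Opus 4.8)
The plan is to exhibit a single concrete family of games for which the \emph{output} of the desired conversion is provably exponentially large, so that no algorithm can produce it in subexponential time regardless of its cleverness. The natural candidate is the game $\Gamma$ from Example~\ref{ex:dim-codim} together with its dual $\Gamma^*$, and the whole argument reduces to an output-size counting bound.

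First I would treat the direction ``intersection given, union sought''. Take $\Gamma$ with $2n$ players as in Example~\ref{ex:dim-codim}. The input is the explicit intersection $\Gamma=\left[1;1,1,0,\ldots,0\right]\cap\cdots\cap\left[1;0,\ldots,0,1,1\right]$ of $n$ weighted games, each specified by a quota and $2n$ weights in $\{0,1\}$, so the input has size polynomial in $n$ (hence in the number of players). By Example~\ref{ex:dim-codim}, $\Gamma$ has codimension $2^{n-1}$, and since codimension is by definition the \emph{least} number of weighted games in any union representation, every such representation lists at least $2^{n-1}$ weighted games. Each listed game must be emitted, so the output alone has length $\Omega\!\left(2^{n-1}\right)$, which is exponential in the input size; any algorithm that produces a correct union representation must therefore run for at least that many steps.

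Next I would handle the symmetric direction ``union given, intersection sought'' via $\Gamma^*$. By Example~\ref{ex:dim-codim} the game $\Gamma^*$ is supplied as the union $\cW_1^*\cup\cdots\cup\cW_n^*$ of $n$ weighted (unanimity) games, again an input of polynomial size, while $\dim(\Gamma^*)=2^{n-1}$. Hence any intersection representation of $\Gamma^*$ uses at least $2^{n-1}$ weighted games, and the identical output-size argument forces exponential running time.

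The only point that needs care — and the sole potential obstacle — is that this lower bound is \emph{unconditional} and stems purely from the length of the output, not from any hardness assumption, so one must pin down both endpoints. On the input side the supplied representation is manifestly of polynomial size ($n$ weighted games over $2n$ players with $0/1$ weights and unit quota); on the output side the exponentially many games are genuinely forced, which rests entirely on the dimension/codimension values of Example~\ref{ex:dim-codim} (themselves obtained from the cited composition-of-unanimity-games result together with Lemma~\ref{lem:game-dual}). Granting those values, the elementary counting step ``number of weighted games $\ge 2^{n-1}$ implies output length $\ge 2^{n-1}$'' closes the argument in both directions.
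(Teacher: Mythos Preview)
Your proposal is correct and matches the paper's approach exactly: the proposition is stated immediately after Example~\ref{ex:dim-codim} with no explicit proof, and is meant to follow directly from that example by the output-size argument you spell out. You have simply made explicit the two directions (using $\Gamma$ for ``intersection given, union sought'' and $\Gamma^*$ for ``union given, intersection sought'') and the unconditional lower bound coming from the required number of weighted games in the output.
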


The complexity of 
several problems about representations of simple games as intersections of \WMG{}s were analyzed in~\cite{DeWo06}. We provide here a new reduction from the  NP-hard \emph{Subset Sum Problem (\SSP)}. Our reduction differs in the fact that  for the associated game  we know both the dimension and the codimension.  Recall that in the \SSP on input  
 a finite set $A=\{a_1,\ldots,a_n\}$ of positive integers and a positive integer $b$ we want to know whether  there is a subset $A'\subseteq{}A$ such that $\sum_{a_i\in{}A'}a_i=b$.

Let  $I=(b; a_1, a_2, \ldots, a_n)$ be an instance of the \SSP, $d>1$,
and let $\Gamma(I,d)$ be the  game defined on $n+d$ players by the intersection of the $d$ \WMG{}s:
\[
 \begin{array}{l}
  [3b+1; 3 a_1, \ldots, 3 a_n, 1, 1, 0, \stackbin{2(d-1)}{\ldots}, 0],\
  [3b+1; 3 a_1, \ldots, 3 a_n, 0, 0, 1, 1, 0, \stackbin{2(d-2)}{\ldots}, 0],\\
  \ldots,\  [3b+1; 3 a_1, \ldots, 3 a_n, 0, \stackbin{2(d-1)}{\ldots}, 0, 1, 1].
 \end{array}
\]




\begin{lemma} 
\label{prop:inter}\label{lem:red}
Let $d>1$. When  $I$ is a yes instance of \SSP then $dim(\Gamma(I,d))=d$ and $codim(\Gamma(I,d))=2^{d}$, otherwise,   $dim(\Gamma(I,d))=codim(\Gamma(I,d))=1$.
\end{lemma}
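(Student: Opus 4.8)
The plan is to handle the two cases separately and to lean on Lemma~\ref{lem:game-dual}, which gives $\mathrm{codim}(\Gamma(I,d))=\dim(\Gamma(I,d)^*)$, so that every codimension claim reduces to a dimension claim for the dual game. Throughout write $A'=S\cap\{1,\dots,n\}$ for the subset-sum part of a coalition $S$, $w(A')=\sum_{a_i\in A'}a_i$, and $P_k=\{n+2k-1,n+2k\}$ for the $k$-th pair among the last $2d$ players. The first thing I would record is the normal form of $\Gamma(I,d)$. Since each generating game has quota $3b+1$, SSP-weights $3a_i$, and two dummies of weight $1$, the $k$-th game accepts $S$ iff $3w(A')+|S\cap P_k|\ge 3b+1$; as $|S\cap P_k|\le 2<3$ this is equivalent to $w(A')\ge b+1$, or $w(A')=b$ and $S\cap P_k\neq\emptyset$. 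Intersecting over $k$, we get $S\in\cW$ iff $w(A')\ge b+1$, or $w(A')=b$ and $S$ meets \emph{every} pair. This description drives the whole argument.

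In the \emph{no} case there is no $A'$ with $w(A')=b$, so the middle clause never fires and $S\in\cW$ iff $w(A')\ge b+1$; hence $\Gamma(I,d)=[\,b+1;a_1,\dots,a_n,0,\dots,0\,]$ is itself weighted, and since $\emptyset$ loses it is not the trivial game, giving $\dim=\mathrm{codim}=1$. In the \emph{yes} case $\dim(\Gamma(I,d))\le d$ is free from the defining representation, so the work is the matching lower bound. Fixing $A^\ast$ with $w(A^\ast)=b$, I would take the $d$ maximal losing coalitions $L_k=A^\ast\cup\bigcup_{j\neq k}P_j$, each losing because it meets every pair but the $k$-th, and argue by trading: if one weighted game $G$ with $\cW\subseteq\cW_G$ rejected both $L_i$ and $L_j$, put $W=A^\ast\cup\bigcup_{\ell\neq i,j}P_\ell\cup\{n+2i-1,n+2j-1\}$ and $W'=A^\ast\cup\bigcup_{\ell\neq i,j}P_\ell\cup\{n+2i,n+2j\}$. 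Both meet all pairs at level $b$, hence win, yet $W\uplus W'=L_i\uplus L_j$ as multisets, so $2q\le w_G(W)+w_G(W')=w_G(L_i)+w_G(L_j)<2q$, a contradiction. Thus each $L_k$ needs a private game in any intersection representation, forcing $\dim(\Gamma(I,d))=d$.

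For the codimension I would first produce the union giving the upper bound $2^d$. Index the $2^d$ transversals $\sigma$, where $\sigma$ selects one player $t_k(\sigma)\in P_k$ from each pair, and set $T_\sigma=\{t_1(\sigma),\dots,t_d(\sigma)\}$. Define $G_\sigma=[\,d(b+1);\,da_1,\dots,da_n,\varepsilon_\sigma\,]$, where a selected dummy of $T_\sigma$ gets weight $1$ and every other dummy weight $0$. A short check shows $G_\sigma$ wins exactly when $w(A')\ge b+1$, or $w(A')=b$ and $S\supseteq T_\sigma$; in particular $G_\sigma\subseteq\Gamma(I,d)$. Since a coalition at level $b$ meets all pairs iff it contains some full transversal, $\bigcup_\sigma G_\sigma=\Gamma(I,d)$, whence $\mathrm{codim}(\Gamma(I,d))\le 2^d$.

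The remaining and genuinely hard step is $\mathrm{codim}(\Gamma(I,d))\ge 2^d$, i.e.\ $\dim(\Gamma(I,d)^*)\ge 2^d$. Dualizing, $S$ wins in $\Gamma^*$ iff $w(A')\ge b'+1$, or $w(A')=b'$ and $S$ contains a \emph{full} pair, with $b'=\sum_i a_i-b$, and the natural witnesses are the $2^d$ maximal losing coalitions $B^\ast\cup T_\sigma$ (where $w(B^\ast)=b'$). The target is to show that no single weighted game above $\Gamma^*$ can reject two of them. I expect this to be the main obstacle: the clean $2$-for-$2$ trade used for the dimension bound \emph{collapses} when two transversals differ in a single pair, since the induced multiset can only be resplit as one winning plus one losing coalition rather than two losing ones, so the crude pairwise trade fails. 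This is precisely the phenomenon that lets the pure pair game of Example~\ref{ex:dim-codim} drop to $2^{d-1}$, and the whole point is to rule it out here. The approach I would pursue is to replace pairwise trades by a global infeasibility (Farkas/LP-duality) certificate over the entire family of boundary coalitions at level $w=b'$, trying to exploit the factor-$3$ gap in the weights—which pins boundary coalitions to the exact level and limits how the dummy weights can compensate across pairs. Making this quantitative, and thereby separating the genuine $2^d$ behaviour from the $2^{d-1}$ behaviour of the dummy-only game, is where I expect essentially all of the difficulty to concentrate.
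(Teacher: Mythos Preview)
Your treatment of the no case and of the dimension in the yes case is correct; the explicit trading argument with the $L_k$'s is a valid and more elementary substitute for the paper's one-line reduction to Example~\ref{ex:dim-codim}, and your upper bound construction for the codimension via the $2^d$ transversal games $G_\sigma$ is also fine (the paper does not spell this out).

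The genuine gap is exactly where you locate it: the lower bound $\mathrm{codim}(\Gamma(I,d))\ge 2^d$. You correctly observe that the pairwise $2$-for-$2$ trade breaks down for transversals differing in a single pair, and you leave the step open, proposing an unspecified Farkas/LP argument. This is not a proof, and the difficulty you anticipate is real: a bare trading argument on the $2^d$ coalitions $B^\ast\cup T_\sigma$ will not distinguish $2^d$ from $2^{d-1}$. What you are missing is a structural observation that avoids this combinatorics entirely. The dual $\Gamma(I,d)^*$ is a \emph{composition} of $d+1$ unanimity games --- one component coming from the subset-sum block on $\{1,\dots,n\}$ and one from each pair $P_k$ --- and Freixas and Puente~\cite{FP01} prove that such a composition has dimension exactly $2^{(d+1)-1}=2^d$. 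Invoking that result (together with Lemma~\ref{lem:game-dual}) closes the lower bound immediately, with no trading or LP certificate needed. Without this compositional viewpoint or an equivalent global argument, your proposal does not establish the codimension claim.
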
 
\begin{proof}
Let $X\subset\{1,\ldots,n\}$,  $E=\{n+1,\dots, n+2d\}$ and $Y\subset E$.  Let $\Gamma$ be the game given in Example 1 taking $E$ as the set of players.
Observe that if $\sum_{i\in{}X}a_i> b$,  $\sum_{i\in{}X}a_i\ge{}b+1$, thus  $X \cup Y$ is a winning coalition in  $\Gamma(I,d)$. 
When $\sum_{i\in{}X}a_i< b$, $\sum_{i\in{}X}a_i \le{}b-1$, thus  $X \cup Y$ is a losing coalition in  $\Gamma(I,d)$. 
In the case that  $\sum_{i\in{}X}a_i= b$,   $X \cup Y$ is a winning coalition in  $\Gamma(I,d)$ iff $Y$ is a winning coalition in $\Gamma$.

When  $I$ is a yes instance of \SSP it follows that  $dim(\Gamma(I,d)) =d$ as otherwise $\Gamma$ will have a smaller dimension. On the other hand, it is easy to see that $\Gamma(I,d)^*$ is the composition of $d+1$ unanimity games and therefore $dim(\Gamma(I,d)^*) =2^d$ according to  \cite{FP01}, thus $codim(\Gamma(I,d)) =2^d$. 
When $I$ is a  no instance, there is  no $X\subseteq \{1,\dots,n\}$  for which  $\sum_{i\in{}X}a_i= b$. Therefore  $X\cup Y$ is winning in $\Gamma(I,d)$ iff $\sum_{i\in{}X}a_i\geq  b+1$. 
Thus $\Gamma(I,d)$ is the game $[b+1; a_1,\dots,a_n, 0, \stackbin{2d}{\ldots}, 0]$ and we conclude that $dim(\Gamma(I,d))=codim(\Gamma(I,d))=1$.
\end{proof}

Combining lemmas~\ref{lem:red} and \ref{lem:game-dual}   we can prove  the following results.

\begin{proposition}
\label{prop:inter}
Let $d_1$ and $d_2$ be two integers with $1\le{}d_2<d_1$.
Then the problem of deciding whether the union of $d_1$ given \WMG{}s can also be represented as the union of $d_2$ \WMG{}s is NP-hard.
\end{proposition}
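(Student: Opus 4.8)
The plan is to reduce from the \SSP, reusing the family $\Gamma(I,d)$ and passing to its dual so that the produced instance is naturally presented as a union of weighted games. Given an instance $I=(b;a_1,\dots,a_n)$ of the \SSP, I would set $d=d_1$; note that $1\le d_2<d_1$ forces $d_1\ge 2>1$, so Lemma~\ref{lem:red} applies. I first build $\Gamma(I,d_1)$ as the intersection of the $d_1$ \WMG{}s of its definition, and then invoke Lemma~\ref{lem:game-dual} to obtain, in polynomial time, a representation of $G:=\Gamma(I,d_1)^*$ as a union of exactly $d_1$ \WMG{}s. This $G$, together with the target value $d_2$, is the instance of the decision problem.

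The core of the argument is to pin down $\mathrm{codim}(G)$ in both cases. By Lemma~\ref{lem:game-dual} we have $\mathrm{codim}(G)=\mathrm{codim}(\Gamma(I,d_1)^*)=\dim(\Gamma(I,d_1))$, and Lemma~\ref{lem:red} evaluates the right-hand side: it equals $d_1$ when $I$ is a yes instance and $1$ when $I$ is a no instance. Since a game can be written as a union of $d_2$ \WMG{}s precisely when its codimension is at most $d_2$ (a shorter union is padded with repeated games when needed, which leaves the union unchanged, and the minimality in the definition of codimension rules out shorter representations), $G$ is representable as a union of $d_2$ \WMG{}s if and only if $\mathrm{codim}(G)\le d_2$.

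Combining these facts with $d_2<d_1$ yields the equivalence: $G$ can be represented as a union of $d_2$ \WMG{}s if and only if $\mathrm{codim}(G)=1$, that is, if and only if $I$ is a no instance of the \SSP. As the whole construction is computable in polynomial time and the \SSP is NP-hard, this transfers the hardness to the representability problem. The step I expect to be the main obstacle — beyond the routine polynomial-time bookkeeping of the dual representation — is getting $\mathrm{codim}(G)$ exactly right through the two lemmas, since everything hinges on the clean dichotomy $\dim(\Gamma(I,d_1))\in\{1,d_1\}$ supplied by Lemma~\ref{lem:red}. A related point that needs care in the write-up is the orientation of the reduction: it maps yes instances of the \SSP to games that are \emph{not} representable with $d_2$ \WMG{}s, so the polynomial map connects the \SSP to the complement of the decision problem; the hardness statement should be phrased with this orientation in mind.
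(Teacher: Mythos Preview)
Your proposal is correct and follows essentially the same route the paper indicates: build $\Gamma(I,d_1)$, dualize via Lemma~\ref{lem:game-dual} to obtain a union of $d_1$ \WMG{}s, and read off $\mathrm{codim}(G)=\dim(\Gamma(I,d_1))\in\{1,d_1\}$ from Lemma~\ref{lem:red}. Your caveat about the orientation of the reduction is apt; the paper's ``NP-hard'' should be read in the Turing-reduction sense (or, equivalently, as hardness of the complementary question), which is consistent with the Deĭneko--Woeginger result it parallels.
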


\begin{proposition}
\label{prop:codim1}
Let $d_1$ and $d_2$ be two integers with $1\le{}d_1,d_2$.
Then the problem of deciding whether the intersection (union) of $d_1$ given \WMG{}s can also be represented
as the union (intersection) of $d_2$ \WMG{}s is NP-hard.
\end{proposition}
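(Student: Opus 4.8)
The plan is to handle both predicates at once by exploiting the duality of Lemma~\ref{lem:game-dual}. Given a game $\Gamma$ presented as an intersection of $d_1$ \WMG{}s, that lemma produces in polynomial time a presentation of $\Gamma^*$ as a union of $d_1$ \WMG{}s, and conversely; moreover $\Gamma$ is a union of $d_2$ \WMG{}s iff $\Gamma^*$ is an intersection of $d_2$ \WMG{}s. Hence dualizing turns an instance of the ``intersection of $d_1 \to$ union of $d_2$'' problem into an equivalent instance of the ``union of $d_1\to$ intersection of $d_2$'' problem with the same yes/no answer, and vice versa. It therefore suffices to establish NP-hardness of a single direction, say deciding whether a game $\Gamma$ given as an intersection of $d_1$ \WMG{}s satisfies $\mathrm{codim}(\Gamma)\le d_2$.

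For that direction I would reduce from \SSP using precisely the gadget of Lemma~\ref{lem:red}. On input $I$ I output the $d_1$ \WMG{}s whose intersection is $\Gamma(I,d_1)$. By Lemma~\ref{lem:red}, $\mathrm{codim}(\Gamma(I,d_1))=2^{d_1}$ when $I$ is a yes-instance and $\mathrm{codim}(\Gamma(I,d_1))=1$ otherwise, so $\Gamma(I,d_1)$ is representable as a union of $d_2$ \WMG{}s exactly when $I$ is a no-instance: since $d_2\ge 1$ the no-case always succeeds, while the yes-case fails precisely when $d_2<2^{d_1}$. As \SSP is NP-complete, this polynomial-time map — which sends yes-instances to no-instances and conversely — witnesses hardness (formally it reduces the complement of \SSP to the stated predicate, i.e.\ NP-hardness under Turing reductions, the complement being handled by negating the answer). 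This already settles every pair with $d_2<2^{d_1}$, in particular every fixed $d_2$ once $d_1>\log_2 d_2$, and padding the intersection with trivial all-winning \WMG{}s lets one use any intermediate $d\le d_1$ with $2^{d}>d_2$.

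The step I expect to be the main obstacle is the complementary regime $d_2\ge 2^{d_1}$, where $\Gamma(I,d_1)$ has codimension at most $2^{d_1}\le d_2$ in \emph{both} cases and so carries no information; this is the true reason the companion Proposition~\ref{prop:inter} carries the hypothesis $d_2<d_1$. Covering it requires a gadget that is still an intersection of $d_1$ \WMG{}s — hence of dimension at most $d_1$ — yet whose codimension can exceed the fixed threshold $d_2$ in one of the two cases. Such bounded-dimension, large-codimension games do exist: the game ``at least one player of $A$ \emph{and} at least one player of $B$'' is the dual of the disjunction of two unanimity games, has dimension two, and has codimension growing linearly in $\min(|A|,|B|)$. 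The delicate part is then to splice such a factor onto the \SSP gadget (for instance by intersecting with a fixed dimension-bounded high-codimension game) so that the \WMG-count stays $d_1$, the dimension stays at most $d_1$, and the codimension crosses $d_2$ precisely according to the \SSP answer. Proving the codimension \emph{lower} bound for the combined game is the technical crux; the matching upper bound follows from Lemma~\ref{lemma_upper_bound_maximal_losing}.
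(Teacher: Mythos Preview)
Your core argument is precisely what the paper intends when it writes ``combining Lemmas~\ref{lem:red} and~\ref{lem:game-dual}'': dualise via Lemma~\ref{lem:game-dual} to interchange the two parenthetical variants, then feed the \SSP instance $I$ into the gadget $\Gamma(I,d_1)$ of Lemma~\ref{lem:red}. For all pairs with $d_1>1$ and $d_2<2^{d_1}$ your reduction is correct and coincides with the paper's; the yes/no flip you flag is implicit there as well and is the usual loose (Turing-reduction) reading of ``NP-hard'' in this line of work.

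You have in fact been more careful than the paper. Your observation that the bare combination of the two lemmas ceases to separate the two \SSP outcomes once $d_2\ge 2^{d_1}$---and that the case $d_1=1$ is outright degenerate, since a single \WMG is trivially a union of any prescribed number of \WMG{}s---is correct, and the paper's one-line justification does not treat this range. So what you have located is a gap in the paper's write-up, not in your own reasoning. Your sketch for closing it (splice in a fixed bounded-dimension, high-codimension factor built from two large disjoint blocks) points in a reasonable direction, and the Freixas--Puente composition results do support that $\dim(U_A\cup U_B)$ grows with the block sizes, so your building block is sound. But, as you already concede, proving the codimension \emph{lower bound} for the spliced game while keeping the number of intersected \WMG{}s at exactly $d_1$ is genuine additional work that neither you nor the paper carries out.
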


As a consequence of the previous results, given a simple game $\Gamma$ as union or intersection of \WMG{}s,
to compute $dim(\Gamma)$, $codim(\Gamma)$ or deciding whether $\Gamma$ is weighted are  NP-hard problems.

Recall that two game representations are said to be \emph{equivalent} whenever  the represented games have the same set of winning coalitions.
We can extend several results on equivalence problems from ~\cite{ElkindGGW08} to games given as unions of \WMG, in particular we have.

\begin{proposition}
\label{theo:Elkind4}
Checking whether a given  union of \WMG{}s
is equivalent to a given  union of \WMG{}s is co-NP-complete,
even if all weights are equal to 0 or 1. 
\end{proposition}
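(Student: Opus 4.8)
The plan is to establish the two halves of co-NP-completeness separately: membership in co-NP, which is routine and unaffected by the $0/1$ restriction, and then co-NP-hardness by a polynomial reduction from the complement of \textsc{Independent Set} that is designed so that every weight produced lies in $\{0,1\}$.

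For membership I would note that two games fail to be equivalent exactly when some coalition $S\subseteq N$ is winning in one union and losing in the other. Such an $S$ is a polynomial-size certificate, and it is checkable in polynomial time: a coalition is winning in a union of \WMG{}s iff it meets the quota of at least one of the listed games, i.e.\ the union is the logical \textsc{or} of the individual membership tests, each a single weight-sum comparison. Hence non-equivalence is in NP and equivalence is in co-NP, and this argument does not depend on the magnitude of the weights.

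For hardness I would encode a graph $H=(V,E)$ together with an integer $q$ (an \textsc{Independent Set} instance) using $V$ as the player set. For each edge $e=\{u,v\}$ let $g_e$ be the \WMG with quota $2$, weight $1$ on $u$ and on $v$, and weight $0$ elsewhere; a coalition wins $g_e$ iff it contains both $u$ and $v$. Let $B=[q;1,\dots,1]$ be the game won exactly by the coalitions of size at least $q$. Set $A=\bigcup_{e\in E}g_e$ and compare $G_1:=A\cup B$ with $G_2:=A$; both are unions of $0/1$-weighted games of size polynomial in $H$. The heart of the reduction is the claim that $G_1$ and $G_2$ are equivalent iff $H$ has no independent set of size at least $q$: if no such set exists then every coalition of size at least $q$ must contain an edge, so $B\subseteq A$ (every coalition winning in $B$ is winning in $A$) and therefore $A\cup B=A$; conversely, an independent set $S^\ast$ with $|S^\ast|\ge q$ is winning in $B$ but, containing no edge, is losing in $A$, so $S^\ast$ distinguishes $G_1$ from $G_2$. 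Since \textsc{Independent Set} is NP-complete, the condition ``$H$ has no independent set of size at least $q$'' is co-NP-complete, and the map $(H,q)\mapsto(G_1,G_2)$ reduces it to the equivalence problem with weights in $\{0,1\}$.

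The step I expect to carry the real content is the absorption biconditional $A\cup B=A\iff$ ``no large independent set,'' together with the design choice that makes it go through while keeping all weights in $\{0,1\}$. This is precisely where the single genuine threshold game $B$ is indispensable: the edge games $g_e$ are unanimity games, and a union of only unanimity (more generally clause-like) games is a monotone DNF, whose equivalence is testable in polynomial time by checking each term, so hardness cannot arise from unanimity games alone. The count-threshold $B$ is what injects winning behaviour with exponentially many minimal winning coalitions, forcing the distinguishing coalition to be \emph{searched} for; verifying that $B$ is absorbed by the edge games exactly when the independent sets are small is the crux. Minor boundary cases (for instance $H$ edgeless, or $q>|V|$) are handled directly and do not affect the reduction.
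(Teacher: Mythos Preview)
Your argument is correct. The paper itself does not spell out a proof of this proposition; it simply states that the result follows by extending the equivalence results of Elkind et al.\ from intersections of \WMG{}s to unions, the natural vehicle being the duality of Lemma~\ref{lem:game-dual} (the dual of a union of weighted games is an intersection of their duals, and dualising a \WMG preserves the weight vector while only shifting the quota, so the $0/1$ restriction survives).

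Your route is genuinely different: rather than importing the intersection result and dualising, you build a direct reduction from the complement of \textsc{Independent Set}, encoding each edge as a two-player unanimity game and adding a single cardinality-threshold game $B=[q;1,\dots,1]$ whose absorption by the edge games is equivalent to the non-existence of an independent set of size $q$. The paper's approach is shorter because it defers to an external theorem; yours is self-contained and, as a bonus, isolates \emph{why} the $0/1$ case is already hard---your remark that a union of unanimity games alone is a monotone DNF with polynomial-time equivalence testing, so a genuine threshold component such as $B$ is indispensable, is a nice structural observation the paper does not make. One small point: when $E=\emptyset$ your $G_2$ is an empty union; you flag this as a boundary case, and indeed it is disposed of either by restricting to graphs with at least one edge (harmless for NP-hardness) or by padding with a single always-losing $0/1$ game.
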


\section{Future work}
It remains open to exhaustively classify the dimension and codimension of all complete simple game  up to $n$ players.
Some bounds about dimension are given by Freixas and Puente~\cite{FP01} and Olsen \emph{et al.}~\cite{OlKuMo16}.
As well as to find complete simple games with \emph{small} dimension (codimension), but
with \emph{large} codimension (dimension), and to construct analytical examples with specific dimension and codimension.
It is also interesting to find real simple games with \emph{large} dimension or codimension as the example given by Kurz and Napel~\cite{KuNa15}.

%

\section*{References}

\end{document}